\numberwithin{equation}{section}
\theoremstyle{plain}
\newtheorem{thm}{Theorem}[section]
\newtheorem{exm}[thm]{Example}
\newtheorem{defn}[thm]{Definition}
\newtheorem{rem}[thm]{Remark}
\title{A modified formal Lagrangian formulation for general differential equations  }
\author{Linyu Peng\footnote{Adjunct faculty member at the School of Mathematics and Statistics, Beijing Institute of Technology, China, and adjunct researcher at the Waseda Institute for Advanced Study, Waseda University, Japan. Email: l.peng@mech.keio.ac.jp } \vspace{0.4cm}
\\
{\it Department of Mechanical Engineering, Keio University,}
 \\
{\it Yokohama 223-8522, Japan}}
\begin{document}

\maketitle
\abstract{In this paper, we propose a modified formal Lagrangian formulation by introducing dummy  dependent variables and prove the existence of such a formulation for any system of  differential equations.  The corresponding Euler--Lagrange equations, consisting of the original system and its adjoint system about the dummy variables, reduce to the original system via a simple substitution for the dummy variables. The formulation is applied to study conservation laws of differential equations through Noether's Theorem and in particular, a nontrivial conservation law of the Fornberg--Whitham equation is obtained by using its Lie point symmetries. Finally, a correspondence between conservation laws of the incompressible Euler equations and variational symmetries of the relevant modified formal Lagrangian is shown. 

}


{\it Keywords:}  Modified formal Lagrangians; Self-adjointness; Symmetries; Conservation laws; Noether's Theorem


%
%

\section{Introduction}
A lot of differential equations arising from physical phenomena can be derived from variational calculus, that studies extrema of functionals, i.e., functions whose arguments are also functions. Variational structure not only allows us to study  geometric properties of differential equations systematically (e.g., \cite{And1989,KraVin1999,MR1999a}), but also serves as an important framework for the development of geometric numerical  integrators (e.g., \cite{HLW2006,MPS1998,MW2001}). Another great advantage of a variational structure is that Noether's Theorem can immediately be applied to derive conservation laws of variational differential equations. Noether's Theorem, establishing a one-to-one correspondence between variational symmetries and conservation laws of the Euler--Lagrange equations, was proved by Emmy Noether and published in 1918 \cite{Noe1918};  see \cite{Olver1993} for a modern version and \cite{KS2011} for a history of Noether's Theorem  together with her second theorem. 

Conservation laws are among the most important properties of differential equations. To apply Noether's Theorem, it is necessary to study inverse problems, namely, to distinguish variational differential equations from the others, and to find the corresponding functional when the system is variational. Unfortunately, a variational structure is not always available for general differential equations. Many methods for deriving  conservation laws of general or special type of  differential equations, nevertheless,  have been developed, for instance, Vinogradov's $\mathscr{C}$-spectral sequence  \cite{KraVin1999,Vin1984a,Vin1984b},  symbolic methods \cite{GH1997,Her2006}, the direct construction method of Anco \& Bluman \cite{AB2002a,AB2002b}, using partial Lagrangians \cite{KM2006}, the formal variational structure and self-adjointness method \cite{AH1975,Gan2011,Ibr2007,Ibr2011a}, etc. 

In particular, the formal variational structure was firstly proposed by Atherton \& Homsy \cite{AH1975}  in 1975 where they called the composite principles of differential equations; see also \cite{Olver1993}. Atherton \& Homsy showed that a composite principle can always be formulated for any general system of differential equations. Their formulation was then developed by Ibragimov \cite{Ibr2007}, by studying the self-adjointness of the Euler--Lagrange equations consisting of the original system and its adjoint system.   A formal Lagrangian $L$ can be defined for a general system of differential equations $\{F_{\alpha}=0,\quad \alpha=1,2,\ldots,l\}$  by introducing dummy dependent variables $v$, namely
\begin{equation*}
L:=\sum_{\alpha}v^{\alpha}F_{\alpha}.
\end{equation*}
Symmetries of the original system can be extended to variational symmetries of the corresponding formal variational problem, and hence conservation laws of the Euler--Lagrange equations, which include both the original system and its adjoint system, can be derived using Noether's Theorem. If, through a proper  substitution for the dummy variables, the Euler--Lagrange equations reduce to the original system, namely, the adjoint system is equivalent to the original system, called the self-adjointness of the system, then the so-obtained 
  conservation laws turn into conservation laws of the original system through the same substitution. Although these conservation laws can sometimes be trivial, this method  provides a straightforward algorithm for computing conservation laws of non-variational differential equations using Noether's Theorem. Furthermore, it also makes the development of variational integrator for non-variational differential systems possible; see, e.g., \cite{KM2015}. Some studies on extensions of such techniques to discrete equations are available, e.g., \cite{Peng2015,Peng2017}.

Even though every system can be embedded into a bigger variational system, without the self-adjointneess it is difficult to obtain conservation laws of the original system by simply using Noether's Theorem.  It was realised, unfortunately, that self-adjointness of many differential equations can not be expressed using  simple/unified substitutions for the dummy variables. More complex substitutions were introduced and successfully applied to some differential equations,  for instance, the weak self-adjointness \cite{Gan2011} and nonlinear self-adjointness \cite{Ibr2011a}. However, they are often case-by-case depending on the system of interest. In this paper, we propose a systematical method that is applicable to study conservation laws of all differential equations by deriving their self-adjointness through the simplest substitution for dummy variables $v$. This is made possible  by modifying a formal Lagrangian by adding an extra so-called balance function $L_0$, which is independent from the dummy variables:
\begin{equation*}
\widehat{L}:=\sum_{\alpha}v^{\alpha}F_{\alpha}+L_0
\end{equation*}
This  method will be called  a {\it modified formal Lagrangian formulation}.  First of all, it should be noted that the balance function $L_0$ can always be systematically constructed. Secondly, the corresponding modified Euler--Lagrange equations reduce to the original system $\{F_{\alpha}=0,\quad \alpha=1,2,\ldots,l\}$ by the simplest substitution for the dummy variables via $v=u$. By extending symmetries of the original system---which keep the balance function $L_0$ invariant---to variational symmetries of the modified formal Lagrangian $\widehat{L}$, conservation laws of the modified Euler--Lagrange equations, which are obtained from Noether's Theorem, become that of the original system through the same substitution $v=u$. This approach of constructing conservation laws works equally well for other variational symmetries of $\widehat{L}$, which are not necessary extended from known symmetries of the original system. Worked examples will be provided.

 To make the paper self-contained, we will review relevant  fundamental theories on symmetry analysis in Section \ref{sec:rev}, for instance, the linearized symmetry condition for determining symmetries of differential equations,  symmetries of variational problems and conservation laws of Euler--Lagrange equations obtained from Noether's Theorem, and a brief introduction to the formal Lagrangian method proposed by Ibragimov. Readers who are familiar with these topics and notations may move to Section \ref{sec:mfl} directly. In Section \ref{sec:mfl}, we define the modified formal Lagrangian formulation for a general system of differential equations.  Algorithms for extending symmetries of a  system of differential equations to variational symmetries of the corresponding modified formal Lagrangian are given too. The viscous Burgers' equation is used as an illustrative running example. Further concrete examples will be studied in Section \ref{sec:con}, including the derivation of a nontrivial conservation law for the Fornberg--Whitham equation using a symmetry extended from its Lie point symmetries and a correspondence between variational symmetries of modified formal Lagrangians and conservation laws of fluid equations.

\section{A review of symmetries, conservation laws and Noether's Theorem}\label{sec:rev}
In this section, we  briefly review the linearized symmetry condition for computing symmetries of differential equations and conservation laws obtained through Noether's Theorem; details can be found in, e.g., Olver's book \cite{Olver1993}. Ibragimov's formal Lagrangian approach for computing conservation laws will also be reviewed. 

\subsection{The linearized symmetry condition}
For a system of differential equations, let $x=(x^1,x^2,\ldots,x^p)\in \mathbb{R}^p$ be the independent variables and  let $u=(u^1,u^2,\ldots,u^q)\in\mathbb{R}^q$ be the dependent variables. In the examples though, we will use $t$ and $x$ to denote the  time and space  as the independent variables. 
A system of differential equations is defined on the jet bundles (cf. \cite{KraVin1999,Kup1980,Sau1989}) coordinated with
\begin{equation*}
(x,[u])
\end{equation*} 
where $[u]$ denotes $u$ and sufficiently many of their derivatives, written in terms of the multi-index notations 
 \begin{equation*}
u^{\alpha}_{\bold{J}}=\frac{\partial^{|\bold{J}|} u^{\alpha}}{\partial (x^1)^{j_1}\partial (x^2)^{j_2}\ldots \partial (x^p)^{j_p}},
\end{equation*} 
where $\bold{J}=(j_1,j_2,\ldots,j_p)$ and $|\bold{J}|=j_1+j_2+\cdots +j_p$. Each index $j_i$ is a non-negative integer, denoting the number of total derivatives with respect to the independent variable $x^i$. Therefore, a system of differential equations can be written locally as
\begin{equation}\label{eq:de0}
\mathcal{A}=\left\{F_{\alpha}(x,[u])=0,\quad \alpha=1,2,\ldots,q\right\}.
\end{equation}
Note than we assumed  that the number of equations in the system \eqref{eq:de0} is the same as the dimension of dependent variables $u$. 
For simplicity, we will often assume  that the system is analytic and totally nondegenerate, the latter of which means the system itself and its prolongations are of maximal rank and locally solvable; see, e.g., Definition 2.83 in \cite{Olver1993}. 

Consider the following invertible local transformations around $\varepsilon=0$:
\begin{equation}\label{eq:tran}
\begin{aligned}
&x\mapsto \widetilde{x}(x,u,\varepsilon),\quad 
u\mapsto \widetilde{u}(x,u,\varepsilon), \\
\text{ subject to }  ~&\widetilde{x}(x,u,0)=x,\quad \widetilde{u}(x,u,0)=u,
\end{aligned}
\end{equation} 
 which can be prolonged to the derivatives $u^{\alpha}_{\bold{J}}$. In practice, it is often more convenient to prolong the corresponding infinitesimal generator as introduced below. The transformations form a local symmetry group of the system \eqref{eq:de0} if and only if the group maps one solution $u=f(x)$ to another solution $\widetilde{u}=\widetilde{f}(\widetilde{x})$. It is often more convenient to use the corresponding {\bf infinitesimal generator }
\begin{equation}\label{eq:ig}
X=\xi^i(x,u)\frac{\partial}{\partial {x^i}}+\phi^{\alpha}(x,u)\frac{\partial}{\partial {u^{\alpha}}},
\end{equation}
where 
\begin{equation*}
\xi^i(x,u):=\frac{\operatorname{d}}{\operatorname{d}\!\varepsilon}\Big|_{\varepsilon=0}\widetilde{x}^i,\quad \phi^{\alpha}(x,u):=\frac{\operatorname{d}}{\operatorname{d}\!\varepsilon}\Big|_{\varepsilon=0}\widetilde{u}^{\alpha}.
\end{equation*}
Note that the Einstein summation convention is used from now on.
Prolongation of the transformations \eqref{eq:tran} to higher jets yields  prolongation of the infinitesimal generator given by (cf. Theorem 2.36 in \cite{Olver1993})
\begin{equation}
\operatorname{pr}\!X=\xi^iD_i+Q^{\alpha}\frac{\partial}{\partial u^{\alpha}}+\cdots+(D_{\bold{J}}Q^{\alpha})\frac{\partial}{\partial u^{\alpha}_{\bold{J}}}+\cdots.
\end{equation}
The tuple $Q$ with $Q^{\alpha}=\phi^{\alpha}-\xi^jD_iu^{\alpha}$ is called the {\bf characteristic} of $X$ and $D_i$ is the total derivative with respect to $x^i$:
\begin{equation*}
D_i:=\frac{\partial}{\partial x^i}+u^{\alpha}_{\bold{1}_i}\frac{\partial }{\partial u^{\alpha}}+\cdots+u^{\alpha}_{\bold{J}+\bold{1}_i}\frac{\partial }{\partial u^{\alpha}_{\bold{J}}}+\cdots,
\end{equation*}
where $\bold{1}_i$ is the $p$-tuple with only one nonzero entry $1$ at the $i$-th place. The multi-index notation $D_{\bold{J}}$ denotes a multiple number of total derivatives:
\begin{equation*}
D_{\bold{J}}=D_1^{j_1}D_2^{j_2}\cdots D_p^{j_p}, \quad \bold{J}=(j_1,j_2,\ldots,j_p).
\end{equation*}

For a system of differential equations \eqref{eq:de0} satisfying the nondegeneracy condition, i.e., of maximal rank and locally solvable,  a vector field $X$ generates a symmetry group of local transformations if and only if the {\bf linearized symmetry condition} is satisfied (e.g., \cite{Hydon2000,Olver1993}), namely 
\begin{equation}
\operatorname{pr}\!X(F_{\alpha})=0,\quad \alpha=1,2,\ldots,q, \text{ whenever the system \eqref{eq:de0} holds}.
\end{equation}
The total nondegeneracy  and analyticity conditions further allow us to express  the linearized symmetry condition 
equivalently to the existence of $q\times q $ matrices $(K^{\bold{J}}(x,[u]))$ whose entries are functions of $(x,[u])$ such that 
\begin{equation}\label{eq:lsc}
\operatorname{pr}\!X(F_{\alpha})=\sum_{\beta,\bold{J}}K^{\bold{J}}_{\alpha\beta}\left(D_{\bold{J}}F_{\beta}\right), \quad \alpha=1,2,\ldots,q.
\end{equation}

\begin{rem}
Symmetries corresponding to an infinitesimal generator of the form \eqref{eq:ig} are called Lie point symmetries. They will be called generalised symmetries when coefficients of the infinitesimal generator not only depend on $x$ and $u$ but also on derivatives of $u$, namely when the infinitesimal generator is of the form
\begin{equation*}
X=\xi^i(x,[u])\frac{\partial}{\partial x^i}+\phi^{\alpha}(x,[u])\frac{\partial}{\partial u^{\alpha}}.
\end{equation*}
\end{rem}

\begin{exm} \label{exm:kdv1}
The Korteweg--de Vries (KdV) equation
\begin{equation}\label{eq:kdv}
u_t+uu_x+u_{xxx}=0
\end{equation}
admits  a four-dimensional group of Lie point symmetries generated by 
\begin{equation*}
X_1=\partial_t,\quad X_2=\partial_x,\quad X_3=t\partial_x+\partial_u,\quad X_4=3t\partial_t+x\partial_x-2u\partial_u.
\end{equation*}
\end{exm}

\subsection{Variational symmetries and Noether's Theorem}

Consider a variational problem with a functional 
\begin{equation}\label{eq:vp}
\mathscr{L}[u]=\int_{\Omega} L(x,[u])\operatorname{d}\!x
\end{equation}
defined in an open, connected subspace $\Omega$ with smooth boundary,
where  the smooth function $L(x,[u])$ is called a Lagrangian (density function). 
Variational calculus leads to the Euler--Lagrange equations $\bold{E}_{u^\alpha}(L)=0$, $\alpha=1,2,\ldots,q$, which are written using the {\bf Euler operators}  (e.g., Section 4.1 in \cite{Olver1993})
\begin{equation}\label{eq:eo}
\begin{aligned}
\bold{E}_{u^{\alpha}}:&=\sum_{\bold{J}}(-D)_{\bold{J}}\frac{\partial}{\partial u^{\alpha}_{\bold{J}}}\\
&=\frac{\partial}{\partial u^{\alpha}}-D_i\frac{\partial}{\partial u^{\alpha}_{\bold{1}_i}}+D_iD_j\frac{\partial}{\partial u^{\alpha}_{\bold{1}_i+\bold{1}_j}}-\cdots,
\end{aligned}
\end{equation}
where $(-D)_{\bold{J}}$ is the adjoint of the operator $D_{\bold{J}}$: $(-D)_{\bold{J}}=(-1)^{|\bold{J}|}D_{\bold{J}}$.

Invariance of the variational problem \eqref{eq:vp}
  with respect to the transformations \eqref{eq:tran} can be expressed as the  {\bf infinitesimal invariance  criterion} (cf. \cite{Olver1993}, Definition 4.33)
\begin{equation}\label{eq:iic}
\operatorname{pr}\!X(L)+LD_i\xi^i=\operatorname{Div}A
\end{equation} 
for some $p$-tuple $A(x,[u])$, where $X$ is the corresponding infinitesimal generator \eqref{eq:ig}.  In fact, this can be extended to generalised symmetries equally.
Such symmetries generated by $X$ are called (divergence) {\bf variational symmetries.}  The divergence of a $p$-tuple $A$ is defined as
\begin{equation*}
\operatorname{Div}A:=D_{i}A^i.
\end{equation*}

A {\bf conservation law} of a system \eqref{eq:de0} is a divergence expression of a $p$-tuple $P(x,[u])$
\begin{equation*}
\operatorname{Div}P=0
\end{equation*} 
that vanishes on solutions of the system. Conservation laws can be trivial in two ways: The first kind is that the $p$-tuple $P$ itself vanishes on solutions of the system, e.g., when each component $P^{i}$ is a linear combination of the equations $\{F_{\alpha}\}$; The second kind is $\operatorname{Div}P\equiv 0$ holds identically for all functions $u=f(x)$. 
 In particular, for a totally nondegenerate system of differential equations, a conservation law can be equivalently understood as the existence of functions $M_{\alpha}^{\bold{J}}(x,[u])$ such that
\begin{equation}
\operatorname{Div}P=\sum_{\alpha,\bold{J}}M^{\bold{J}}_{\alpha}(D_{\bold{J}}F_{\alpha}),
\end{equation}
which can be integrated by parts to yield its characteristic form 
\begin{equation}
\operatorname{Div}\widehat{P}=Q^{\alpha}F_{\alpha}.
\end{equation}
Here $Q(x,[u])$ is the {\bf characteristic} of the equivalent conservation laws $P$ and $\widehat{P}$.  If further the system is analytic, then a conservation law is trivial if and only if its characteristic $Q$ is trivial; a trivial characteristic is defined as
\begin{equation}
Q=0 \text{ holds on solutions of the system \eqref{eq:de0}}. 
\end{equation}
When a characteristic is given, the divergence form can be derived  by using the homotopy operator on the variational bicomplex providing the cohomology is trivial (e.g., \cite{And1989,KraVin1999}) or by intuition. A general formula is  available in \cite{Olver1993} (see Section 5.4 therein).

For a variational problem, Noether's Theorem establishes a one-to-one correspondence between  variational symmetries and conservation laws of the Euler--Lagrange equations.

\begin{thm}[{\bf Noether's Theorem}]
Suppose that a vector field 
\begin{equation*}
X=\xi^i(x,[u])\frac{\partial} {\partial{x^i}}+\phi^{\alpha}(x,[u])\frac{\partial}{\partial {u^{\alpha}}}
\end{equation*}
satisfies the infinitesimal invariance criterion \eqref{eq:iic} for a variational problem \eqref{eq:vp}. Then its characteristic $Q^{\alpha}=\phi^{\alpha}(x,[u])-\xi^i(x,[u])u^{\alpha}_i$ is also the characteristic of a conservation law for the corresponding Euler--Lagrange equations $\bold{E}_{u^{\alpha}}(L)=0$. Namely, there exists a $p$-tuple $P(x,[u])$ such that
\begin{equation}
\operatorname{Div}P=Q^{\alpha}\bold{E}_{u^{\alpha}}(L).
\end{equation}
\end{thm}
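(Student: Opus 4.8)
The plan is to reduce the statement to the evolutionary (characteristic) form of the symmetry and then to perform integration by parts on the jet bundle, so that the Euler operator emerges together with an explicit boundary divergence.

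\textbf{Step 1 (pass to the evolutionary form).} For the given vector field $X$, let $X_Q$ be the associated evolutionary field, whose prolongation acts on a differential function by $\operatorname{pr}\!X_Q=\sum_{\bold{J}}(D_{\bold{J}}Q^{\alpha})\,\partial/\partial u^{\alpha}_{\bold{J}}$. I would first recall the standard splitting $\operatorname{pr}\!X=\operatorname{pr}\!X_Q+\xi^iD_i$, which follows from the prolongation formula since the $\bold{J}$-th prolongation coefficient equals $D_{\bold{J}}Q^{\alpha}+\xi^iu^{\alpha}_{\bold{J}+\bold{1}_i}$. Applying both sides to $L$ and using $\xi^iD_iL+LD_i\xi^i=D_i(\xi^iL)$, the infinitesimal invariance criterion \eqref{eq:iic} becomes
\begin{equation*}
\operatorname{pr}\!X_Q(L)=\operatorname{Div}B,\qquad B:=A-\xi L,
\end{equation*}
where $\xi L$ denotes the $p$-tuple with components $\xi^iL$. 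Thus, at the cost of replacing $A$ by $B$, it suffices to treat the case $\xi^i\equiv 0$.

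\textbf{Step 2 (integration by parts and conclusion).} The key identity is
\begin{equation*}
\operatorname{pr}\!X_Q(L)=Q^{\alpha}\bold{E}_{u^{\alpha}}(L)+\operatorname{Div}\Gamma,
\end{equation*}
valid for arbitrary differential functions $Q$ and $L$, with $\Gamma=\Gamma(x,[u])$ an explicit $p$-tuple assembled from $Q^{\alpha}$, the derivatives $\partial L/\partial u^{\alpha}_{\bold{J}}$ and their total derivatives. To obtain it I would apply the Leibniz rule for total derivatives repeatedly to each summand $(D_{\bold{J}}Q^{\alpha})\,(\partial L/\partial u^{\alpha}_{\bold{J}})$ so as to move all the $D_{\bold{J}}$'s off $Q^{\alpha}$; the fully integrated-by-parts remainder is $\sum_{\bold{J}}(-D)_{\bold{J}}(\partial L/\partial u^{\alpha}_{\bold{J}})\cdot Q^{\alpha}=Q^{\alpha}\bold{E}_{u^{\alpha}}(L)$ by the definition \eqref{eq:eo} of the Euler operator, while the term discarded at each step accumulates into $\operatorname{Div}\Gamma$. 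Combining this with Step 1 gives $Q^{\alpha}\bold{E}_{u^{\alpha}}(L)=\operatorname{Div}(B-\Gamma)$, so the theorem holds with $P:=B-\Gamma=A-\xi L-\Gamma$.

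I expect Step 2 to be the only real obstacle: it is the higher-order, multi-component analogue of $fg'=(fg)'-f'g$, and while there is nothing conceptually difficult, one must carry the multi-index bookkeeping with $D_{\bold{J}}$ and $(-D)_{\bold{J}}$ carefully enough both to recognise the Euler operator in the undifferentiated part and to certify that the leftover is a genuine total divergence; a general closed formula for $\Gamma$ is recorded in Olver's book \cite{Olver1993}, which I would either quote or reconstruct by induction on $|\bold{J}|$. Finally, I would remark that nothing in the argument used that $\xi,\phi$ depend only on $(x,u)$, so the same proof covers generalised (divergence) variational symmetries as well.
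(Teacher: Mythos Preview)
Your proof is correct and follows essentially the same route as the paper: split $\operatorname{pr}\!X(L)$ into the horizontal piece $\xi^iD_iL$ and the evolutionary piece $\operatorname{pr}\!X_Q(L)$, absorb the former together with $LD_i\xi^i$ into $D_i(\xi^iL)$, and integrate the latter by parts to extract $Q^{\alpha}\bold{E}_{u^{\alpha}}(L)$ plus a divergence. The paper's $B$ is your $\Gamma$, and both arrive at $P=A-\xi L-(\text{boundary term})$; the only difference is that you break the argument into two labelled steps while the paper runs it as a single chain of equalities.
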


Various proofs can be found in different contexts. We briefly review Olver's proof (\cite{Olver1993}, Theorem 4.29) by integrating the identity \eqref{eq:iic} by parts:
\begin{equation}\label{eq:ibp}
\begin{aligned}
\operatorname{Div}A&=\operatorname{pr}\!X(L)+LD_i\xi^i\\
&=\xi^iD_iL+\sum_{\alpha,\bold{J}}(D_{\bold{J}}Q^{\alpha})\frac{\partial L}{\partial u^{\alpha}_{\bold{J}}}+LD_i\xi^i\\
&=D_i(L\xi^i)+Q^{\alpha}\sum_{\bold{J}}(-D)_{\bold{J}}\frac{\partial L}{\partial u^{\alpha}_{\bold{J}}}+\operatorname{Div}B
\end{aligned}
\end{equation} 
for some $p$-tuple $B(x,[u])$. The resulting conservation law is
\begin{equation}
\operatorname{Div}P=Q^{\alpha}\bold{E}_{u^{\alpha}}(L), \text{ where }  P=A-L\xi-B.
\end{equation}

\begin{exm}
Let us consider the  $1+1$-dimensional  linear wave equation $u_{tt}-c^2u_{xx}=0$ as an illustrative example, where $c\neq 0$. The corresponding Lagrangian is 
\begin{equation*}
L(u_t,u_x)=-\frac{1}{2}u_t^2+\frac{c^2}{2}u_x^2
\end{equation*}
and the wave equation can be equivalently written as $\bold{E}_u(L)=0$. Since the Lagrangian
is explicitly independently from $t$ and $x$, it admits the time translational and space translational symmetries, namely 
\begin{equation*}
t\mapsto t+\varepsilon_1,\quad x\mapsto x+\varepsilon_2,
\end{equation*}
whose infinitesimal generators are respectively
\begin{equation*}
\partial_t,\quad \partial_x.
\end{equation*}
The corresponding conservation laws are written in terms of the characteristics $-u_t$ and $-u_x$, namely
\begin{equation*}
\begin{aligned}
D_t\left(-\frac{1}{2}u_t^2-\frac{c^2}{2}u_x^2\right)+D_x\left(c^2u_tu_x\right)&=-u_t(u_{tt}-c^2u_{xx}),\\
D_t\left(-u_tu_x\right)+D_x\left(\frac{1}{2}u_t^2+\frac{c^2}{2}u_x^2\right)&=-u_x(u_{tt}-c^2u_{xx}).
\end{aligned}
\end{equation*}
Note that in the integration by parts formula \eqref{eq:ibp}, $A\equiv 0$ for both infinitesimal generators. 


\end{exm}

\subsection{Formal Lagrangians and self-adjointness}
In this subsection, we brief review formal Lagrangians and the self-adjointness approach for computing conservation laws.

By introducing dummy dependent variables $v$ with the same dimension of the dependent variables $u$, the {\bf formal Lagrangian} for a system of differential equations \eqref{eq:de0} is defined as 
\begin{equation}
L(x,[u;v]):=v^{\alpha}F_{\alpha}(x,[u]).
\end{equation}
The corresponding Euler--Lagrange equations consist of two parts, namely the original system 
\begin{equation}\label{eq:de00}
0=\bold{E}_{v^{\alpha}}(L)\equiv F_{\alpha}(x,[u]), \quad \alpha=1,2,\ldots, q,
\end{equation}
and the so-called adjoint system 
\begin{equation}\label{eq:ade0}
\begin{aligned}
0=\bold{E}_{u^{\alpha}}(L):=F_{\alpha}^{\ast}(x,[u;v]), \quad \alpha=1,2,\ldots, q.
\end{aligned}
\end{equation}
The system \eqref{eq:de00} is said to be (quasi) {\bf self-adjoint} if the adjoint system \eqref{eq:ade0} is equivalent to itself via a proper substitution $v=h(u)$. In other words, Euler--Lagrange equations governed by the formal Lagrangian reduce to the original system via the substitution $v=h(u)$.

\begin{rem}
It was realised that many systems are not self-adjoint through a substitution $v=h(u)$. In recent years, there have been generalisations  to, for instance, weak self-adjointness  and nonlinear self-adjointness (e.g., \cite{Ibr2011a,Gan2011}), that, however, have been found restricted in deriving nontrivial conservation laws (e.g., \cite{HHV2014}); such an example is studied in Section \ref{subsec41}. Here, we only introduce the simplest case of self-adjointness.
\end{rem}


An important observation by Ibragimov \cite{Ibr2007} is that any symmetry generator $X$ of the original system \eqref{eq:de00} can be extended to a variational symmetry generator $X+\phi_*^{\alpha}\partial_{v^{\alpha}}$ for the formal Lagrangian, yielding a conservation law of the Euler--Lagrange equations \eqref{eq:de00} and \eqref{eq:ade0} using Noether's Theorem. Taking the self-adjointness condition into consideration, this conservation law becomes a conservation law of the original system \eqref{eq:de00} via the substitution $v=h(u)$. Although non-triviality and completeness of so-obtained conservation laws are not promised \cite{Anco2017}, its simplicity for implementation is a great advantage while on the other side the formal Lagrangian structure provides necessary foundations  for constructing variational integrator \cite{KM2015}. We will illustrate the algorithm by considering the KdV equation \eqref{eq:kdv} as an example.


\begin{exm} {\emph {(KdV equation continued.)}} The formal Lagrangian for the KdV equation is
\begin{equation*}
L=vF \text{ where }  F=u_t+uu_x+u_{xxx},
\end{equation*}
and the adjoint equation is
\begin{equation*}
F^*:=-v_t-uv_x-v_{xxx}=0.
\end{equation*}
The adjoint equation turns into the KdV equation through the substitution $v=u$:
\begin{equation*}
F^*\Big|_{v=u}=-F.
\end{equation*}

Lie point symmetries of the KdV equation were given in Example \ref{exm:kdv1}.
Here we compute the  conservation law corresponding to the scaling symmetry $$X_4=3t\partial_t+x\partial_x-2u\partial_u.$$ The extended variational symmetry for the formal Lagrangian is generated by 
\begin{equation*}
Y_4=X_4+v\frac{\partial}{\partial v}.
\end{equation*}
By using the characteristics $Q^u=-2u-3tu_t-xu_x$ and $Q^v=v-3tv_t-xv_x$, the 
 conservation law is written in characteristic form
$$D_tP^t(t,x,[u;v])+D_xP^x(t,x,[u;v])=Q^uF^*+Q^vF.$$ 
Substituting $v=u$ inside, we obtain a conservation law of the KdV equation, the conservation of momentum, as follows
\begin{equation*}
D_t\left(\frac{1}{2}u^2\right)+D_x\left(\frac13u^3+uu_{xx}-\frac{1}{2}u_x^2\right)=uF.
\end{equation*}
\end{exm}

\section{The modified formal Lagrangian formulation}\label{sec:mfl}
The formal Lagrangian method  defined by Ibragimov is  limited even when we are restricted to evolutionary equations. Nonlinear improvements have been introduced but they are, in many situations, case-by-case, in particular to determine the  substitution of dummy variables.   In this section, we  propose a modification of formal Lagrangians that is applicable to any differential equations, that we will call the modified formal Lagrangian formulation. 
Let us start with a motivating example, the viscous Burgers' equation, which serves as a running example in this section.

\begin{exm}
\label{exm:be0}
The viscous Burgers' equation reads 
\begin{equation*}
u_t+uu_x-au_{xx}=0;
\end{equation*}
we assume that the viscosity $a$ is nonzero. 
By using the usual formal Lagrangian method, the adjoint equation for a dummy variable $v$ can be  calculated from the formal Lagrangian $v\left(u_t+uu_x-au_{xx}\right)$, namely
\begin{equation*}
-v_t-uv_x-av_{xx}=0,
\end{equation*}
which is not equivalent to the viscous Burgers' equation via any substitution $v=h(u)$.

However, if we modify the formal Lagrangian by adding an extra term $-au_x^2$ and define a modified formal Lagrangian as follows
\begin{equation*}
\widehat{L}:=v\left(u_t+uu_x-au_{xx}\right)+auu_{xx},
\end{equation*}
the corresponding  (modified) adjoint equation is equivalent to the viscous Burgers' equation via the substitution $v=u$. In fact, the modified Euler--Lagrange equations consist of two parts:  variation w.r.t. $v$ gives the viscous Burgers' equation, while variation w.r.t. $u$ gives the  adjoint equation, reading
\begin{equation*}
0=\bold{E}_u(\widehat{L})\equiv -v_t-uv_x-av_{xx}+2au_{xx}.
\end{equation*}
Substituting $v=u$ inside  gives an equation differing with the Burgers' equation by a minus sign.
\end{exm}

This example motivates the definition of a modified formal Lagrangian formulation below.

\begin{defn}
 For a system of differential equations \eqref{eq:de0}, namely, 
 $$\mathcal{A}=\left\{F_{\alpha}(x,[u])=0,\quad \alpha=1,2,\ldots,q\right\},$$
 introduce dummy dependent variables $v\in\mathbb{R}^q$. If there exists a function $L_0(x,[u])$ such that the Euler--Lagrange equations governed by the Lagrangian 
\begin{equation}\label{eq:mflag}
\widehat{L}(x,[u;v])=v^{\alpha}F_{\alpha}(x,[u])+L_0(x,[u])
\end{equation}
reduce to the original system \eqref{eq:de0} via the substitution $v=u$, then we call the Lagrangian $\widehat{L}(x,[u;v])$ a {\bf modified formal Lagrangian} and the corresponding function $L_0(x,[u])$ a {\bf balance function}.
\end{defn}

To distinguish from formal Lagrangians, we will use $\widehat{L}$ to denote a modified formal Lagrangian in the current paper.
 There are several fundamentally important remarks or facts regarding the modification. Some of them are as follows.
\begin{itemize}
\item A first remark is that the substitution can be chosen arbitrary as $v=h(x,[u])$ where $h(x,[u])$ are arbitrary functions, but $v=u$ is among the simplest ones such that the adjoint system is equivalent to the original system.
\item Secondly, since the balance function $L_0$ is independent from $v$, half of the modified Euler--Lagrange equations, i.e., $\bold{E}_{v^{\alpha}}(\widehat{L})=0$, is exactly the original system.
\item  A balance function exists for any system of differential equations but not necessarily uniquely. The existence will be proved in Theorem \ref{thm:mflag}. It is not unique due to the existence of null Lagrangians, namely functions written in a divergence form; see, e.g.,  \cite{Olver1993,Olver1995}.
\item When the balance function can be written in a divergence form, the corresponding modified formal Lagrangian becomes a formal Lagrangian, namely without modification. 
\end{itemize}

\begin{thm}
\label{thm:mflag} 
For any system of differential equations
 $$\mathcal{A}=\left\{F_{\alpha}(x,[u])=0,\quad \alpha=1,2,\ldots,q\right\},$$
there exists a generic modified formal Lagrangian 
 \begin{equation}\label{eq:mflde}
 \begin{aligned}
\widehat{L}(x,[u;v]):&=v^{\alpha}F_{\alpha}(x,[u])-u^{\alpha}F_{\alpha}(x,[u])\\
&=\left(v^{\alpha}-u^{\alpha}\right)F_{\alpha}(x,[u]).
\end{aligned}
 \end{equation}
The function $L_0(x,[u])=-u^{\alpha}F_{\alpha}(x,[u])$ will be called a generic balance function.
\end{thm}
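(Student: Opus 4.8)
The plan is to write out the Euler--Lagrange system attached to $\widehat{L}=(v^{\alpha}-u^{\alpha})F_{\alpha}(x,[u])$ as two blocks---variation with respect to $v$ and variation with respect to $u$---and to check that, after the substitution $v=u$, both blocks collapse onto the original system $\mathcal{A}$. Since $\widehat{L}$ is affine in the dummy variables and the generic balance function $L_0=-u^{\alpha}F_{\alpha}$ does not contain $v$ at all, the first block is immediate: $\mathbf{E}_{v^{\alpha}}(\widehat{L})=\mathbf{E}_{v^{\alpha}}(v^{\beta}F_{\beta})=F_{\alpha}(x,[u])$, so the $v$-variational equations are literally $\{F_{\alpha}=0\}$, with no substitution needed.

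The content therefore lies in the $u$-variational equations. I would split, by linearity of the Euler operator, $\mathbf{E}_{u^{\alpha}}(\widehat{L})=\mathbf{E}_{u^{\alpha}}(v^{\beta}F_{\beta})-\mathbf{E}_{u^{\alpha}}(u^{\beta}F_{\beta})$. The first term is, by the definition of the adjoint in \eqref{eq:ade0}, exactly $F^{*}_{\alpha}(x,[u;v])=\sum_{\mathbf{J}}(-D)_{\mathbf{J}}\bigl(v^{\beta}\,\partial F_{\beta}/\partial u^{\alpha}_{\mathbf{J}}\bigr)$. For the second term I would apply the Euler operator \eqref{eq:eo} directly, using $\partial(u^{\beta}F_{\beta})/\partial u^{\alpha}_{\mathbf{J}}=\delta^{\beta}_{\alpha}F_{\beta}$ when $\mathbf{J}=\mathbf{0}$ and $u^{\beta}\,\partial F_{\beta}/\partial u^{\alpha}_{\mathbf{J}}$ otherwise. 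Collecting the $\mathbf{J}=\mathbf{0}$ contribution gives a bare $F_{\alpha}(x,[u])$, while the remaining sum is precisely the adjoint expression with the dummy variable specialised to $u$, that is $F^{*}_{\alpha}(x,[u;u])$ (here one uses that $(-D)_{\mathbf{J}}$ acts the same way on $u^{\beta}$ as on $v^{\beta}$, the latter carrying no $u$-dependence). Hence $\mathbf{E}_{u^{\alpha}}(u^{\beta}F_{\beta})=F_{\alpha}(x,[u])+F^{*}_{\alpha}(x,[u;u])$, and therefore
\[
\mathbf{E}_{u^{\alpha}}(\widehat{L})=F^{*}_{\alpha}(x,[u;v])-F_{\alpha}(x,[u])-F^{*}_{\alpha}(x,[u;u]).
\]

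Setting $v=u$, the two adjoint terms cancel and we are left with $\mathbf{E}_{u^{\alpha}}(\widehat{L})\big|_{v=u}=-F_{\alpha}(x,[u])$. Combined with the $v$-block $F_{\alpha}=0$, the modified Euler--Lagrange system under $v=u$ is $\{F_{\alpha}=0,\ -F_{\alpha}=0\}$, which is exactly $\mathcal{A}$; this is what it means for $\widehat{L}$ to be a modified formal Lagrangian with balance function $L_0=-u^{\alpha}F_{\alpha}$, so the theorem follows.

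The proof has no real conceptual obstacle; the only point requiring care is the order of operations in the last step---one must compute the Euler operator first and substitute $v=u$ afterwards, since the identification of $\sum_{\mathbf{J}}(-D)_{\mathbf{J}}(u^{\beta}\,\partial F_{\beta}/\partial u^{\alpha}_{\mathbf{J}})$ with $F^{*}_{\alpha}(x,[u;u])$ relies on the outer total derivatives acting before any specialisation of $v$. A secondary remark worth recording is that the construction presupposes the number of equations equals $\dim u=q$ (already built into \eqref{eq:de0}), so that $v\in\mathbb{R}^{q}$ and the substitution $v=u$ is meaningful, and that, as noted after the definition, $L_0$ is determined only up to a null (total-divergence) Lagrangian, which does not affect the Euler--Lagrange equations.
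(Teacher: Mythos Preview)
Your proof is correct and follows essentially the same route as the paper: both compute $\mathbf{E}_{u^{\alpha}}(\widehat{L})$ directly and observe that the result reduces to $-F_{\alpha}$ upon setting $v=u$; the paper keeps the factor $(v^{\beta}-u^{\beta})$ intact throughout, whereas you split it into $v^{\beta}F_{\beta}$ and $-u^{\beta}F_{\beta}$ and recombine, but this is only an organisational difference. One small slip in wording: your displayed formula $\partial(u^{\beta}F_{\beta})/\partial u^{\alpha}_{\mathbf{J}}=\delta^{\beta}_{\alpha}F_{\beta}$ for $\mathbf{J}=\mathbf{0}$ omits the additional term $u^{\beta}\,\partial F_{\beta}/\partial u^{\alpha}$, though your subsequent sentence and final identity $\mathbf{E}_{u^{\alpha}}(u^{\beta}F_{\beta})=F_{\alpha}+F^{*}_{\alpha}(x,[u;u])$ show you accounted for it correctly.
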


\begin{proof}
We only need to show that the corresponding  Euler--Lagrange equations reduce to the original system $\mathcal{A}$ via the substitution $v=u$. Now the modified Euler--Lagrange equations read
\begin{equation}
\begin{aligned}
0&=\bold{E}_{v^{\alpha}}(\widehat{L})\equiv F_{\alpha}(x,[u]),\\
0&=\bold{E}_{u^{\alpha}}(\widehat{L}):=\widehat{F}^*_{\alpha}(x,[u;v]).
\end{aligned}
\end{equation}
Direct computation expands the modified adjoint system as follows
\begin{equation}
\begin{aligned}
\widehat{F}^*_{\alpha}&=\bold{E}_{u^{\alpha}}\!\left(v^{\beta}F_{\beta}-u^{\beta}F_{\beta}\right)\\
&=\sum_{\beta,\bold{J}}(-D)_{\bold{J}}\left(\left(v^{\beta}-u^{\beta}\right)\frac{\partial F_{\beta}}{\partial u^{\alpha}_{\bold{J}}}\right)-F_{\alpha},
\end{aligned}
\end{equation}
that obviously reduces to the original system with the substitution $v=u$, i.e.,
\begin{equation}\label{eq:FF}
\widehat{F}^*_{\alpha}\Big|_{v=u}=-F_{\alpha}.
\end{equation}
This completes the proof.
\end{proof}

\begin{rem}
The relation between modified adjoint system and the adjoint system \eqref{eq:ade0} (without modification)  is 
\begin{equation}
\widehat{F}^*_{\alpha}(x,[u;v])=F^*_{\alpha}(x,[u;v])+\bold{E}_{u^{\alpha}}(L_0).
\end{equation}
\end{rem}

In practice, the balance function may include total derivative terms that we often prefer to mod out since they have no contribution in the Euler--Lagrange equations. For instance, the balance function $auu_{xx}$ for the viscous Burgers' equation in Example \ref{exm:be0} is equivalent to the generic one $-u(u_t+uu_x-au_{xx})$ by differing a divergence
\begin{equation}
u(u_t+uu_x-au_{xx})+auu_{xx}=D_t\left(\frac12u^2\right)+D_x\left(\frac13u^3\right).
\end{equation}
Another convenient and equivalent choice is $-au_{x}^2$; see Equation \eqref{eq:bbbb}.

Since there exists a modified formal Lagrangian for any system of differential equations, it would be interesting to consider some well-known examples. 
\begin{itemize}
\item Evolutionary equations
\begin{equation}
u_t^{\alpha}=f^{\alpha}(x,t,[u]_x),\quad \alpha=1,2,\ldots,q,
\end{equation}
where the short hand notation $[u]_x$ denotes $u$ and finitely many of their derivatives w.r.t. to $x$ only. Note that $x$ can be multi-dimensional. The generic modified formal Lagrangian reads
\begin{equation}
\widehat{L}=\sum_{\alpha}v^{\alpha}\left(u_t^{\alpha}-f^{\alpha}\right)-\sum_{\alpha}u^{\alpha}\left(u^{\alpha}_t-f^{\alpha}\right).
\end{equation}
An equivalent modified formal Lagrangian is
\begin{equation}
\widehat{L}=\sum_{\alpha}v^{\alpha}\left(u_t^{\alpha}-f^{\alpha}\right)+\sum_{\alpha}u^{\alpha}f^{\alpha}.
\end{equation}
Note that a special case was considered in \cite{KM2015} (Equations (114-115) therein).
\item A family of Camassa--Holm-type equations
\begin{equation}
u_t-\epsilon u_{xxt}=g(x,t,[u]_x),\quad \varepsilon \neq 0.
\end{equation}
The generic modified formal Lagrangian reads
\begin{equation}
\widehat{L}=v\left(u_t-\epsilon u_{xxt}-g\right)-u\left(u_t-\epsilon u_{xxt}-g\right),
\end{equation}
which is equivalent to 
\begin{equation}
\widehat{L}=v\left(u_t-\epsilon u_{xxt}-g\right)+\epsilon uu_{xxt}+ug.
\end{equation}
\end{itemize}
For concrete examples, further divergence terms can appear and they can also be modded out.

Next, we are going to show the connections between symmetries of the original system and variational symmetries of the modified formal Lagrangian. Such connections allow us to derive  conservation laws of the modified Euler--Lagrange equations using Noether's Theorem, that can amount to conservation laws of the original system.

\begin{thm}\label{thm:sym1}
Consider a system of differential equations
 $$\mathcal{A}=\left\{F_{\alpha}(x,[u])=0,\quad \alpha=1,2,\ldots,q\right\},$$
 that is totally nondegenerate and analytic, and that admits a symmetry generated by 
 \begin{equation*}
X=\xi^i(x,[u])\frac{\partial} {\partial{x^i}}+\phi^{\alpha}(x,[u])\frac{\partial}{\partial {u^{\alpha}}}.
 \end{equation*} 
 Then $X$ can be extended to a variational symmetry
 $$Y=X+\phi_*^{\alpha}(x,[u;v])\frac{\partial}{\partial v^{\alpha}}$$
  of the generic modified formal functional $$\widehat{\mathscr{L}}[u;v]:=\int_{\Omega}\widehat{L}(x,[u;v])\operatorname{d}\!x,$$ where the functions $\phi_*$ are to be determined and the generic modified formal Lagrangian is
 \begin{equation*}
\widehat{L}(x,[u;v]):=v^{\alpha}F_{\alpha}(x,[u])-u^{\alpha}F_{\alpha}(x,[u]).
 \end{equation*}
\end{thm}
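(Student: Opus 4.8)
The plan is to construct $\phi_*$ explicitly so that the infinitesimal invariance criterion \eqref{eq:iic} holds for $\widehat{L}$, using the linearized symmetry condition \eqref{eq:lsc} for $\mathcal{A}$ as the main input. First I would write $\widehat{L} = (v^\alpha - u^\alpha)F_\alpha$ and compute $\operatorname{pr}Y(\widehat{L}) + \widehat{L}\,D_i\xi^i$. Splitting $\operatorname{pr}Y$ into the part $\operatorname{pr}X$ (which acts on $x$, $u$ and their derivatives) and the part $\phi_*^\alpha \partial_{v^\alpha} + \sum_{\bold{J}}(D_{\bold{J}}\phi_*^\alpha)\partial_{v^\alpha_{\bold{J}}}$ (which acts on $v$ and its derivatives), one gets
\begin{equation*}
\operatorname{pr}Y(\widehat{L}) + \widehat{L}\,D_i\xi^i = (v^\alpha - u^\alpha)\big(\operatorname{pr}X(F_\alpha) + F_\alpha D_i\xi^i\big) - \operatorname{pr}X(u^\alpha)\,F_\alpha + \phi_*^\alpha F_\alpha.
\end{equation*}
Here $\operatorname{pr}X(u^\alpha) = \phi^\alpha$, and the natural choice is to cancel the two offending terms by arranging the coefficient of $F_\alpha$ to be a total divergence. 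The guiding idea is that the first group of terms, $(v^\alpha-u^\alpha)(\operatorname{pr}X(F_\alpha)+F_\alpha D_i\xi^i)$, is the exact analogue of what appears for the plain formal Lagrangian, and Ibragimov's construction already tells us how to absorb it: by \eqref{eq:lsc} we have $\operatorname{pr}X(F_\alpha) = \sum_{\beta,\bold{J}} K^{\bold{J}}_{\alpha\beta}(D_{\bold{J}}F_\beta)$, so $(v^\alpha-u^\alpha)\operatorname{pr}X(F_\alpha)$ can be integrated by parts to the form $(\text{something})^\beta F_\beta + \operatorname{Div}(\cdots)$, and likewise for the $D_i\xi^i$ term.

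The key step is therefore to define $\phi_*^\alpha$ as exactly the combination of $(v-u)$-dependent coefficients produced by this integration by parts, plus the term $\phi^\alpha$ needed to kill $-\phi^\alpha F_\alpha$. Concretely I would set
\begin{equation*}
\phi_*^\alpha := \phi^\alpha - \sum_{\beta,\bold{J}} (-D)_{\bold{J}}\!\Big((v^\beta - u^\beta)K^{\bold{J}}_{\beta\alpha}\Big) - (v^\alpha - u^\alpha)D_i\xi^i,
\end{equation*}
(up to sign/index conventions to be checked against \eqref{eq:ade0}–\eqref{eq:lsc}), which is the same recipe Ibragimov uses for the unmodified case corrected by the single extra shift $\phi^\alpha$. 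With this choice the right-hand side of the displayed identity becomes a pure divergence $\operatorname{Div}A$ for an explicit $A(x,[u;v])$ assembled from the boundary terms of the integrations by parts, which is precisely \eqref{eq:iic}. Since the construction uses generalised-symmetry data freely, the criterion \eqref{eq:iic} is verified for generalised symmetries as well, so $Y$ is a (divergence) variational symmetry of $\widehat{\mathscr{L}}$.

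The main obstacle I anticipate is purely bookkeeping rather than conceptual: tracking the multi-index integrations by parts so that the $v$-derivatives appearing in $\operatorname{pr}Y$ are matched correctly, and confirming that the $-u^\alpha F_\alpha$ piece contributes only the clean shift $-\phi^\alpha F_\alpha = -\operatorname{pr}X(u^\alpha)F_\alpha$ (plus a divergence, since $u^\alpha(\operatorname{pr}X(F_\alpha)+F_\alpha D_i\xi^i)$ is handled by the same parts-integration as the $v^\alpha$ term). One subtlety worth flagging is whether $K^{\bold{J}}_{\alpha\beta}$ is uniquely determined; it need not be, but any valid choice yields a valid $\phi_*$, and different choices differ by a characteristic of a trivial conservation law, so the statement — mere existence of an extension $Y$ — is unaffected. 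I would close by noting that substituting $v=u$ makes $\widehat{L}$, and hence $A$, reduce consistently, which is what connects the resulting Noether conservation law back to $\mathcal{A}$, though that is the content of the next result rather than this one.
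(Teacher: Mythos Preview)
Your proposal is correct and follows essentially the same route as the paper: you compute $\operatorname{pr}Y(\widehat{L})+\widehat{L}\,D_i\xi^i$, split off the $(v^\alpha-u^\alpha)$ factor, invoke the linearized symmetry condition \eqref{eq:lsc}, integrate by parts, and arrive at precisely the paper's choice $\phi_*^{\alpha}=\phi^{\alpha}-(-D)_{\bold{J}}\big[(v^{\beta}-u^{\beta})K^{\bold{J}}_{\beta\alpha}\big]-(v^{\alpha}-u^{\alpha})D_i\xi^i$. Your remarks on the non-uniqueness of $K^{\bold{J}}_{\alpha\beta}$ and on the substitution $v=u$ are accurate side observations that the paper either leaves implicit or defers to the discussion following the theorem.
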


\begin{proof}
First of all, as the system is totally nondegenerate and analytic, the linearized symmetry condition is replaced by \eqref{eq:lsc}, namely
\begin{equation*}
\operatorname{pr}\!X(F_{\alpha})=\sum_{\beta,\bold{J}}K^{\bold{J}}_{\alpha\beta}\left(D_{\bold{J}}F_{\beta}\right), \quad \alpha=1,2,\ldots,q,
\end{equation*}
for some functions $K_{\alpha\beta}^{\bold{J}}(x,[u])$. The extended infinitesimal generator $Y$  satisfies the infinitesimal invariance criterion for the modified formal functional, that is,
\begin{equation*}
\begin{aligned}
\operatorname{pr}\!Y(\widehat{L})+\widehat{L}D_i\xi^i=\operatorname{Div}A
\end{aligned}
\end{equation*}
for some $p$-tuple $A(x,[u;v])$. Its left-hand side can be integrated by parts as follows 
\begin{equation}\label{eq:ybb}
\begin{aligned}
\operatorname{pr}\!Y&(\widehat{L})+\widehat{L}D_i\xi^i=\operatorname{pr}\!X(\widehat{L})+\phi_*^{\alpha}F_{\alpha}+\widehat{L}D_i\xi^i\\
&=\left(v^{\alpha}-u^{\alpha}\right)\operatorname{pr}\!X(F_{\alpha})-\phi^{\alpha}F_{\alpha}+\phi_*^{\alpha}F_{\alpha}+\left(v^{\alpha}-u^{\alpha}\right)(D_i\xi^i)F_{\alpha}\\
&=\sum_{\alpha,\beta,\bold{J}}\left(v^{\alpha}-u^{\alpha}\right)K^{\bold{J}}_{\alpha\beta}\left(D_{\bold{J}}F_{\beta}\right)-\phi^{\alpha}F_{\alpha}+\phi_*^{\alpha}F_{\alpha}+\left(v^{\alpha}-u^{\alpha}\right)(D_i\xi^i)F_{\alpha}\\
&=\sum_{\alpha}\Big\{(-D)_{\bold{J}}\left[\left(v^{\beta}-u^{\beta}\right)K^{\bold{J}}_{\beta\alpha}\right]-\phi^{\alpha}+\phi_*^{\alpha}+\left(v^{\alpha}-u^{\alpha}\right)(D_i\xi^i)\Big\}F_{\alpha}+\operatorname{Div}B
\end{aligned}
\end{equation}
for some $p$-tuple $B(x,[u;v])$. Clearly, the undetermined functions $\phi_*$ can be chosen as 
\begin{equation}\label{eq:pp*}
\phi_*^{\alpha}=\phi^{\alpha}-(-D)_{\bold{J}}\left[\left(v^{\beta}-u^{\beta}\right)K^{\bold{J}}_{\beta\alpha}\right]-\left(v^{\alpha}-u^{\alpha}\right)(D_i\xi^i)
\end{equation}
and consequently $A=B$. This finishes the proof.

\end{proof}

Theorem \ref{thm:sym1} implies that any symmetry of the original system amounts to a 
 conservation law of the Euler--Lagrange equations governed by the modified formal Lagrangian.  
However, be noted that the extension of symmetries may  not be unique and the choice in Theorem \ref{thm:sym1}, i.e., Equation \eqref{eq:pp*}, is in fact not the ideal one, because the  conservation law corresponding to the so-extended generator $Y$ becomes a trivial conservation law of the original system  when the substitution $v=u$ is applied:
\begin{equation}\label{eq:pptr}
\phi_*^{\alpha}\Big|_{v=u}=\phi^{\alpha} \text{ and hence } Q^{v^{\alpha}}\Big|_{v=u}=Q^{u^{\alpha}},
\end{equation}
and then we have
\begin{equation}
\begin{aligned}
\operatorname{Div}P\Big|_{v=u}&=\left(Q^{u^{\alpha}}\widehat{F}^*_{\alpha}+Q^{v^{\alpha}}F_{\alpha}\right)\Big|_{v=u}\\
&=\left(-Q^{u^{\alpha}}+Q^{v^{\alpha}}\Big|_{v=u}\right)F_{\alpha}\\
&=0.
\end{aligned}
\end{equation}
The relation \eqref{eq:FF} is applied here.

Fortunately, the extension to a variational symmetry $Y$ may not be unique, particularly when $-\phi^{\alpha}F_{\alpha}$ can be written in divergence form and hence can be moved into the divergence  $\operatorname{Div}B$ in \eqref{eq:ybb}. In fact, to derive a nontrivial conservation law for the original system, we must choose those extensions such that \eqref{eq:pptr} can not happen. Let us consider the running example again.

\begin{exm}{\emph {(The viscous Burgers' equation continued.)}} \label{exm:be1} 
Symmetries of the viscous Burgers' equation (see Example \ref{exm:be0}) can be calculated using the linearized symmetry condition \eqref{eq:lsc} and  its Lie point symmetries  are generated by the following infinitesimal generators
\begin{equation*}
\begin{aligned}
&X_1=\partial_t,\quad X_2=\partial_x,\quad X_3=t\partial_x+\partial_u,\\
&X_4=2t\partial_t+x\partial_x-u\partial_u,\quad X_5=t^2\partial_t+tx\partial_x+(x-tu)\partial_u.
\end{aligned}
\end{equation*}
The generic modified formal Lagrangian reads
\begin{equation*}
\widehat{L}=vF-uF, \text{ where } F:=u_t+uu_x-au_{xx}.
\end{equation*}
The modified adjoint equation is $\widehat{F}^*=0$ where
\begin{equation*}
\widehat{F}^*=-v_t-uv_x-av_{xx}+2au_{xx}.
\end{equation*}

Take $X_3$ as an example. Direct calculation shows that
\begin{equation*}
\operatorname{pr}\!X_3(F)\equiv 0.
\end{equation*}
Consequently, Equation \eqref{eq:ybb}  becomes 
\begin{equation}\label{eq:bybb}
\begin{aligned}
\operatorname{pr}\!Y_3(\widehat{L})&=\left(\phi_*-\phi\right)F\\
&=\left(\phi_*-1\right)F.
\end{aligned}
\end{equation}

\begin{itemize}
\item The extension \eqref{eq:pp*} gives 
\begin{equation*}
\phi_*=1, \text{ and hence } Y_3=t\partial_x+\partial_u+\partial_v.
\end{equation*}
This leads to a trivial conservation law of the viscous Burgers' equation.
\item Equation \eqref{eq:bybb} can be rearranged as follows
\begin{equation*}
\begin{aligned}
\operatorname{pr}\!Y_3(\widehat{L})&=\left(\phi_*-1\right)F\\
&=\phi_*F-D_tu-D_x\left(\frac{1}{2}u^2-au_x\right).
\end{aligned}
\end{equation*}
Therefore, we may choose $\phi_*$ as $0$ or $-1$ instead of $1$. In both cases, we obtain a nontrivial conservation law of the viscous Burgers' equation:
\begin{equation*}
D_tu+D_x\left(\frac{1}{2}u^2-au_x\right)=F.
\end{equation*}
\end{itemize}
\end{exm}

As we notice in the example above that the observation needed for obtaining nontrivial conservation laws is relatively strong. Moreover, we often prefer to Lagrangians including no null information, namely without terms written in divergence form.  The following theorem provides another approach for extending symmetries of a system of differential equations to variational symmetries of its (not necessary generic) modified formal Lagrangian; in fact, this method is often more convenient and practical, compared with Theorem \ref{thm:sym1},  for deriving nontrivial conservation laws.

\begin{thm}\label{thm:weak}
Consider a system of differential equations
 $$\mathcal{A}=\left\{F_{\alpha}(x,[u])=0,\quad \alpha=1,2,\ldots,q\right\},$$
 that is totally nondegenerate and analytic, and that admits a symmetry generated by 
 \begin{equation*}
X=\xi^i(x,[u])\frac{\partial} {\partial{x^i}}+\phi^{\alpha}(x,[u])\frac{\partial}{\partial {u^{\alpha}}}.
 \end{equation*} 
 Assume  \begin{equation*}
\widehat{L}(x,[u;v]):=v^{\alpha}F_{\alpha}(x,[u])+L_0(x,[u]).
 \end{equation*}
 is a modified formal Lagrangian of the system, such that the modified adjoint system is equivalent to the original system via the substitution $v=u$. 
 
If the balance variational problem, whose Lagrangian is the balance function $L_0(x,[u])$, is invariant w.r.t. $X$,
 then $X$ can be extended to a variational symmetry
 $$Y=X+\phi_*^{\alpha}(x,[u;v])\frac{\partial}{\partial v^{\alpha}}$$
  of the modified formal functional $$\widehat{\mathscr{L}}[u;v]:=\int_{\Omega}\widehat{L}(x,[u;v])\operatorname{d}\!x,$$ where the functions $\phi_*$ are to be determined.

\end{thm}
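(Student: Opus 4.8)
The plan is to verify the infinitesimal invariance criterion \eqref{eq:iic} for the extended generator $Y=X+\phi_*^{\alpha}\partial_{v^{\alpha}}$ acting on $\widehat{L}$, that is, to exhibit functions $\phi_*^{\alpha}(x,[u;v])$ and a $p$-tuple $A(x,[u;v])$ with $\operatorname{pr}\!Y(\widehat{L})+\widehat{L}D_i\xi^i=\operatorname{Div}A$. Since $\widehat{L}=v^{\alpha}F_{\alpha}+L_0$ is a sum, I would split the verification accordingly: the $v^{\alpha}F_{\alpha}$ part is treated by the same integration-by-parts computation already carried out in the proof of Theorem \ref{thm:sym1}, while the $L_0$ part is absorbed into the divergence by the invariance hypothesis. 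Throughout, $X$ does not act on the dummy variables $v$, so its prolongation treats $v^{\alpha}$ as inert.

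For the balance part, the assumption that the balance variational problem with Lagrangian $L_0(x,[u])$ is invariant with respect to $X$ means, by \eqref{eq:iic}, that there is a $p$-tuple $C(x,[u])$ with $\operatorname{pr}\!X(L_0)+L_0D_i\xi^i=\operatorname{Div}C$, and this identity is unaffected by the presence of $v$. For the formal part, the same manipulation that produced the first three lines of \eqref{eq:ybb}, applied to the summand $v^{\alpha}F_{\alpha}$ alone, gives $\operatorname{pr}\!Y(v^{\alpha}F_{\alpha})+(v^{\alpha}F_{\alpha})D_i\xi^i=v^{\alpha}\operatorname{pr}\!X(F_{\alpha})+\phi_*^{\alpha}F_{\alpha}+v^{\alpha}(D_i\xi^i)F_{\alpha}$. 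Inserting the linearized symmetry condition \eqref{eq:lsc}, $\operatorname{pr}\!X(F_{\alpha})=\sum_{\beta,\bold{J}}K^{\bold{J}}_{\alpha\beta}(D_{\bold{J}}F_{\beta})$, and integrating the double-total-derivative terms by parts turns this into $\sum_{\alpha}\{\sum_{\beta,\bold{J}}(-D)_{\bold{J}}[v^{\beta}K^{\bold{J}}_{\beta\alpha}]+v^{\alpha}(D_i\xi^i)+\phi_*^{\alpha}\}F_{\alpha}+\operatorname{Div}B$ for some $p$-tuple $B(x,[u;v])$. Choosing $\phi_*^{\alpha}:=-\sum_{\beta,\bold{J}}(-D)_{\bold{J}}[v^{\beta}K^{\bold{J}}_{\beta\alpha}]-v^{\alpha}(D_i\xi^i)$ annihilates the coefficient of $F_{\alpha}$, so the formal part reduces to $\operatorname{Div}B$.

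Adding the two contributions yields $\operatorname{pr}\!Y(\widehat{L})+\widehat{L}D_i\xi^i=\operatorname{Div}(B+C)$, which is the required criterion with $A=B+C$; hence $Y$ with the above $\phi_*$ is a variational symmetry of $\widehat{\mathscr{L}}$. The only step requiring care is the bookkeeping of the integration by parts, and since that has essentially been done in Theorem \ref{thm:sym1} I expect no genuine obstacle: the whole content is that an $X$-invariant $L_0$ contributes only a divergence and therefore cannot spoil the construction. The reason this deserves a separate statement is practical rather than technical—because $L_0$ need not be of divergence form, the resulting $\phi_*$ generically fails the collapse $\phi_*^{\alpha}|_{v=u}=\phi^{\alpha}$ that makes the generic extension \eqref{eq:pp*} produce a trivial conservation law after $v=u$ (for the Burgers' equation of Example \ref{exm:be1} one has $K=0$ and the formula returns the nontrivial choice $\phi_*=0$). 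Finally, I would remark, as in the discussion following Theorem \ref{thm:sym1}, that $\phi_*$ is not unique: one may add to it any $g^{\alpha}$ for which $g^{\alpha}F_{\alpha}$ is itself a total divergence, and exploiting this freedom is part of how nontrivial conservation laws are produced in practice.
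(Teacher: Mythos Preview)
Your proposal is correct and follows essentially the same approach as the paper: split $\widehat{L}=v^{\alpha}F_{\alpha}+L_0$, absorb the $L_0$ contribution into a divergence via the invariance hypothesis, apply the linearized symmetry condition \eqref{eq:lsc} and integrate by parts on the $v^{\alpha}F_{\alpha}$ part, and then choose $\phi_*^{\alpha}=-(-D)_{\bold{J}}[v^{\beta}K^{\bold{J}}_{\beta\alpha}]-v^{\alpha}(D_i\xi^i)$ to kill the residual coefficient of $F_{\alpha}$. The paper's proof is organized as a single chain of equalities rather than treating the two summands separately, and it calls your $C$ by $P_0$, but the content and the resulting formula \eqref{eq:pweak} are identical.
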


\begin{proof}
Since the balance variational problem 
$$\int_{\Omega}L_0(x,[u])\operatorname{d}\!x$$
 is invariant w.r.t. $X$, there exists a $p$-tuple $P_0$ such that 
\begin{equation*}
\operatorname{pr}\!X(L_0)+L_0D_i\xi^i=\operatorname{Div}P_0.
\end{equation*}
Then, we have
\begin{equation*}
\begin{aligned}
\operatorname{pr}\!Y(\widehat{L})+\widehat{L}D_i\xi^i&=\operatorname{pr}\!X\left(v^{\alpha}F_{\alpha}+L_0\right)+\phi_*^{\alpha}F_{\alpha}+\left(v^{\alpha}F_{\alpha}+L_0\right)D_i\xi^i\\
&=v^{\alpha}\operatorname{pr}\!X(F_{\alpha})+\operatorname{pr}\!X(L_0)+\phi_*^{\alpha}F_{\alpha}+v^{\alpha}(D_i\xi^i)F_{\alpha}+L_0D_i\xi^i\\
&=\sum_{\alpha,\beta,\bold{J}}v^{\alpha}K^{\bold{J}}_{\alpha\beta}\left(D_{\bold{J}}F_{\beta}\right)+v^{\alpha}(D_i\xi^i)F_{\alpha}+\phi_*^{\alpha}F_{\alpha}+\operatorname{Div}P_0\\
&=\sum_{\alpha}\Big\{(-D)_{\bold{J}}\left[v^{\beta}K_{\beta\alpha}^{\bold{J}}\right]+v^{\alpha}(D_i\xi^i)+\phi_*^{\alpha}\Big\}F_{\alpha}+\operatorname{Div}\left(B+P_0\right)
\end{aligned}
\end{equation*}
where the $p$-tuple $B(x,[u;v])$ is the consequence of integration by parts.  Therefore, the undetermined functions $\phi_*$ can be chosen as
\begin{equation}\label{eq:pweak}
\phi_*^{\alpha}=-\Big\{(-D)_{\bold{J}}\left[v^{\beta}K_{\beta\alpha}^{\bold{J}}\right]+v^{\alpha}(D_i\xi^i)\Big\}
\end{equation}
such that the modified formal functional is invariant w.r.t. $Y$.
\end{proof}

\begin{exm}{\emph {(The viscous Burgers' equation continued.)}} 
All Lie point symmetries of the viscous Burgers' equation are listed in Example \ref{exm:be1}. Let us consider the  modified formal Lagrangian given in Example \ref{exm:be0}:
\begin{equation*}
\widehat{L}=v\left(u_t+uu_x-au_{xx}\right)+auu_{xx}.
\end{equation*}
Recall that it is equivalent to the generic modified formal Lagrangian, leading to the same Euler--Lagrange equations.

The balance variational problem with Lagrangian $L_0=auu_{xx}$  is (divergence) invariant w.r.t. $X_1$, $X_2$ and $X_3$ such that 
\begin{equation*}
\operatorname{pr}\!X_i(L_0)+L_0\left(D_t\xi^t_i+D_x\xi^x_i\right)=D_xA_i,\quad i=1,2,3,
\end{equation*}
where
\begin{equation*}
A_1=0,\quad A_2=0,\quad A_3=au_x.
\end{equation*}
For each of the three infinitesimal generators, we have
\begin{equation*}
\operatorname{pr}\!X_i(F)\equiv 0,\quad i=1,2,3.
\end{equation*}

From Equation \eqref{eq:pweak}, we obtain $\phi_*=0$ for all of the three infinitesimal generators and hence $Y_i=X_i$, $i=1,2,3$.
 The corresponding characteristics are 
\begin{equation*}
\begin{aligned}
&Q_1^u=-u_t,\quad Q_1^v=-v_t,\\
&Q_2^u=-u_x,\quad Q_2^v=-v_x,\\
&Q_3^u=1-tu_x,\quad Q_3^v=-tv_x.
\end{aligned}
\end{equation*} 
The variational symmetries $Y_i$ satisfy 
\begin{equation*}
\operatorname{pr}\!Y_i(\widehat{L})+\widehat{L}\left(D_t\xi^t_i+D_x\xi^x_i\right)=D_xA_i, \quad i=1,2,3.
\end{equation*}
 
 Recall that 
\begin{equation*}
\begin{aligned}
F&=u_t+uu_x-au_{xx},\\
\widehat{F}^*&=-v_t-uv_x-av_{xx}+2au_{xx}.
\end{aligned}
\end{equation*}
The corresponding conservation laws of the modified Euler--Lagrange equations are respectively given by
\begin{equation*}
D_tP^t_i+D_xP^x_i=Q^u_i\widehat{F}^*+Q^v_iF,\quad i=1,2,3, 
\end{equation*}
where
\begin{equation*}
\begin{aligned}
&P_1^t=-uu_xv+au_{xx}v+au_x^2,\quad P^x_1=uu_tv-2au_tu_x-au_{tx}v+au_tv_x,\\
&P^t_2=-uv_x,\quad P_2^x=uv_t-au_x^2+au_xv_x,\\
&P_3^t=-(1-tu_x)v,\quad P_3^x=-uv+2au_x-av_x-tu_tv-atu_x^2+atu_xv_x.
\end{aligned}
\end{equation*}
In fact, they can be simply derived from the integration by parts formula \eqref{eq:ibp} and in this special case, we obtain them as
\begin{equation*}
P_i^t=-\widehat{L}\xi_i^t-Q_i^u\frac{\partial \widehat{L}}{\partial u_t},\quad  P_i^x=A_i-\widehat{L}\xi_i^x-Q^u_i\left(\frac{\partial \widehat{L}}{\partial u_x}-D_x\frac{\partial \widehat{L}}{\partial u_{xx}}\right)-\left(D_xQ^u_i\right)\frac{\partial \widehat{L}}{\partial u_{xx}},
\end{equation*}
where $i=1,2,3$. A general formula can be found in Section 5.4 of \cite{Olver1993}.

Setting $v=u$ in the three conservation laws, only the third one contributes to a nontrivial conservation law of the viscous Burgers' equation, namely
\begin{equation*}
D_tu+D_x\left(\frac{1}{2}u^2-au_x\right)=F,
\end{equation*}
which is the same as we obtained in Example \ref{exm:be1}. According to the analysis of cohomology of $\mathscr{C}$-spectral sequence, this is the only (local) conservation law for the viscous Burgers' equation; see, e.g., \cite{KraVin1999}. 


\end{exm}

Note that an equivalent modified formal Lagrangian for the viscous Burgers' equation can be chosen to simplify the calculations above,  which reads
\begin{equation}\label{eq:bbbb}
\widehat{L}=v\left(u_t+uu_x-au_{xx}\right)-au_{x}^2.
\end{equation}
The new balance variational problem with Lagrangian $-au_x^2$ is invariant---rather than divergence invariant---with respect to all three symmetries.

Beside the extension of known symmetries of a system to variational symmetries of its modified formal functionals through either Theorem \ref{thm:sym1} or Theorem \ref{thm:weak}, one may 
also use their own variational symmetries (not necessary extended from known symmetries) to derive  conservation laws using Noether's Theorem; see Section \ref{subsec42} for an illustrative example from fluid mechanics.

\section{Concrete examples}
\label{sec:con}

In this section, we will study some concrete examples from physics and fluid mechanics.
In the first example, we obtain a nontrivial conservation for the Fornberg--Whitham equation that has not been successfully achieved using the previous formal Lagrangian method.
For differential equations from fluid mechanics, we show how to derive conservation laws from a modified formal Lagrangian's variational symmetries, that are not necessary extended from known symmetries of the original differential equations.


\subsection{The Fornberg--Whitham equation}\label{subsec41}
The Fornberg--Whitham (FW) equation is a nonlinear dispersive wave equation, admitting a wave of greatest height, e.g., \cite{FW1978}. Symmetry analysis of a bigger family of nonlinear partial differential equations was conducted in \cite{CMP1997}. It was shown in \cite{HHV2014} (see also \cite{Ibr2011b}) that the FW equation is neither quasi self-adjoint nor weak self-adjoint through the formal Lagrangian approach; although it is nonlinearly self-adjoint but only trivial conservation laws could  be obtained. In this subsection, we will study its modified formal Lagrangian formulation to derive conservation laws.

The FW equation can be written as $F=0$ with 
\begin{equation}
F=u_t- u_{xxt}+u_x+uu_x-3u_xu_{xx}-uu_{xxx}.
\end{equation}
It admits a three-dimensional group of Lie point symmetries whose infinitesimal generators are
\begin{equation*}
X_1=\partial_t,\quad X_2=\partial_x,\quad X_3=t\partial_x+\partial_u.
\end{equation*}

Let us study the conservation law related to $X_3$ by considering the following modified formal Lagrangian
\begin{equation*}
\widehat{L}=vF+L_0(x,t,[u]),
\end{equation*}
where $v$ is the dummy dependent variable and  the balance function is chosen as
\begin{equation*}
L_0=uu_xu_{xx}.
\end{equation*}
It is equivalent to the generic one by modding out all divergence terms. Direct computation gives 
the modified Euler--Lagrange equations 
\begin{equation*}
\begin{aligned}
0&=\bold{E}_v(\widehat{L})\equiv F,\\
0&=\bold{E}_u(\widehat{L}):=\widehat{F}^*, 
\end{aligned}
\end{equation*}
where the modified adjoint equation is
\begin{equation*}
\widehat{F}^*=-v_t+v_{xxt}-v_x-uv_x+3u_xu_{xx}+uv_{xxx}
\end{equation*}
satisfying $$\widehat{F}^*\Big|_{v=u}=-F.$$

First of all, we shall check that the balance variational problem is  invariant w.r.t. $X_3$. Namely the infinitesimal invariance criterion \eqref{eq:iic} is satisfied for $L_0$; by noting $\operatorname{Div}\xi_3\equiv 0$,  we have
\begin{equation*}
\operatorname{pr}\!X_3(L_0)=D_x\left(\frac12u_x^2\right).
\end{equation*}

It can be checked that $\operatorname{pr}\!X_3(F)\equiv 0$, and hence Equation \eqref{eq:pweak} gives the extension of $X_3$ to a variational symmetry $Y_3=X_3$ of the modified formal Lagrangian, whose characteristics is written in components as 
\begin{equation*}
Q^u=1-tu_x,\quad Q^v=-tv_x.
\end{equation*}
The corresponding conservation law for the modified Euler--Lagrange equation is written in characteristic form as
\begin{equation*}
D_tP^t+D_xP^x=Q^u\widehat{F}^*+Q^vF,
\end{equation*}
where
\begin{equation*}
\begin{aligned}
&P^t=-v+v_{xx}-tuv_x+tu_{xx}v_x,\\
&P^x=-v+tuv_t+uv_{xx}-u_xv_x+\frac32u_x^2-tu_xv_{xt}-tuu_xv_{xx}+tu_x^2v_x+tuu_{xx}v_x-tu_x^3.
\end{aligned}
\end{equation*}
Substituting $v=u$ inside, it becomes a nontrivial conservation law of the FW equation written in  characteristic form as follows
\begin{equation}
D_t\left(u-u_{xx}\right)+D_x\left(u+\frac12u^2-u_x^2-uu_{xx}\right)=F.
\end{equation}

A remark on the symmetries $X_1$ and $X_2$ is that they are also variational symmetries for the balance variational problem and hence lead to nontrivial conservation laws for the modified Euler--Lagrange equations. But no new nontrivial conservation law of the FW equation can be achieved after the substitution $v=u$ is applied.




\subsection{Fluid mechanics: The incompressible Euler equations as an illustration}\label{subsec42}
In this subsection, we will show how well-known conservation laws of fluid systems represented in the Eulerian framework can be derived using the modified Lagrangian formulation. In the Lagrangian framework, variational formulation for incompressible flow has been known for quite long time.  In this paper, we study the incompressible Euler equations as an example. The same methodology applies to study other fluid equations, e.g.,  the compressible Euler equations and the compressible and incompressible Navier--Stokes equations, and other differential equations equally. Note that other variational formulations for the incompressible Euler equations exist, for instance, the Clebsch variational principle and a multisymplectic formulation (see \cite{CHH2007} and references therein for more details).

The incompressible Euler equations are the following system of partial differential equations for the velocity $u\in\mathbb{R}^n$ ($n=2$ or $3$) and the pressure $p\in\mathbb{R}$:
\begin{equation}
\begin{aligned}
&u_t+u\cdot \nabla u+\nabla p=0,\\
&\nabla\cdot u=0.
\end{aligned}
\end{equation}
Dimension of the space variable is the same as the velocity, i.e., $n=2$ or $3$, and time is one-dimensional.
This system models the flow of inviscid, incompressible fluid with constant density. Note that in this subsection, we use $n$ to denote the dimension of variables rather than $p$ (and $q$) used above as it means a different thing in fluid mechanics. Furthermore, the dummy dependent variable corresponding to the pressure $p$ will be denoted by $q$.

In Cartesian coordinates, the system can be written in component form as follows
\begin{equation}
\begin{aligned}
&\frac{\partial u^i}{\partial t}+u^j\frac{\partial u^i}{\partial x^j}+\frac{\partial p}{\partial x^i}=0,\quad i=1,2,\ldots,n,\\
&\frac{\partial u^j}{\partial x^j}=0.
\end{aligned}
\end{equation}
In this paper, we only consider the three-dimensional case, i.e., $n=3$. Introducing dummy dependent variables $v\in\mathbb{R}^3$ and $q\in\mathbb{R}$, we define the modified formal Lagrangian by
\begin{equation}\label{eq:mflineuler}
\begin{aligned}
\widehat{L}=q\left(\frac{\partial u^j}{\partial x^j}\right)+\sum_iv^i\left(\frac{\partial u^i}{\partial t}+u^j\frac{\partial u^i}{\partial x^j}+\frac{\partial p}{\partial x^i}\right)-\sum_{i,j}u^iu^j\frac{\partial u^j}{\partial x^i}.
\end{aligned}
\end{equation}
It is equivalent to the generic one and the modified Euler--Lagrange equations, consisting of the incompressible Euler equations and the adjoint equations
\begin{equation}
\begin{aligned}
&-\left(\frac{\partial v^i}{\partial t}+u^j\frac{\partial u^i}{\partial x^j}+\frac{\partial q}{\partial x^i}+\frac{\partial}{\partial x^j}\left(v^iu^j-u^iu^j\right)\right)=0,\quad i=1,2,\ldots,n,\\
&-\frac{\partial v^j}{\partial x^j}=0,
\end{aligned}
\end{equation}  reduce to the incompressible Euler equations via the substitutions $v=u$ and $q=p$.

\subsubsection{Conservation laws related to extended variational symmetries}
 It is known that when $n=3$, the system of incompressible Euler equations admits the following Lie point symmetries (e.g., Example 2.45 in \cite{Olver1993}):
\begin{itemize}
\item Moving coordinates: 
\begin{equation*}
\begin{aligned}
f_i\partial_{x^i}+f'_i\partial_{u^i}-f''_ix^i\partial_p,\quad i=1,2,3;
\end{aligned}
\end{equation*}
\item Time translation:
\begin{equation*}
\partial_t;
\end{equation*}
\item Scaling:
\begin{equation*}
\begin{aligned}
&x^i\partial_{x^i}+t\partial_t,\\
&t\partial_t-u^i\partial_{u^i}-2p\partial_p;
\end{aligned}
\end{equation*}
\item Rotations:
\begin{equation*}
x^i\partial_{x^j}-x^j\partial_{x^i}+u^i\partial_{u^j}-u^j\partial_{u^i},\quad i,j=1,2,3 \text{ and }i<j;
\end{equation*}
\item Pressure changes:
\begin{equation*}
g\partial_p.
\end{equation*}
\end{itemize}
Here the functions $f_i$ and $g$ are arbitrary functions of $t$. 

According to Theorem \ref{thm:weak}, these symmetries can be extended to variational symmetries of the modified formal Lagrangian if they are variational symmetries of the balance variational problem, whose Lagrangian is the balance function 
\begin{equation}
L_0=-\sum_{i,j}u^iu^j\frac{\partial u^j}{\partial x^i}.
\end{equation}
Using the infinitesimal invariance criterion \eqref{eq:iic},  it is immediate to verify that the balance variational problem  is invariant only w.r.t these symmetries: spatial translations  (i.e., moving coordinates with constant functions $f_i$), time translation, rotations and pressure changes.
Their infinitesimal generators $X$ can be extended to variational symmetries 
$$Y=X+\phi_{\ast}^i\partial_{v^i}+\phi_{\ast}^q\partial_q$$ 
of the modified formal Lagrangian according to Theorem \ref{thm:weak}.
\begin{itemize}
\item Extension of spatial translations  $\partial_{x^i}$:  
\begin{equation*}
Y_i=\partial_{x^i},\quad i=1,2,3;
\end{equation*}
\item Extension of time translation $\partial_t$: 
\begin{equation*}
Y=\partial_t;
\end{equation*}
\item Extension of rotations $x^i\partial_{x^j}-x^j\partial_{x^i}+u^i\partial_{u^j}-u^j\partial_{u^i}$:
\begin{equation*}
Y_{ij}=x^i\partial_{x^j}-x^j\partial_{x^i}+u^i\partial_{u^j}-u^j\partial_{u^i}+v^i\partial_{v^j}-v^j\partial_{v^i},\quad i,j=1,2,3 \text{ and } i<j;
\end{equation*}
\item Extension of pressure changes $g(t)\partial_p$: 
\begin{equation*}
Y=g(t)\partial_p.
\end{equation*}
\end{itemize}

Noether's Theorem then yields conservation laws of the modified Euler--Lagrange equations, which, by using the substitution $v=u$ and $q=p$, turn into conservation laws of the incompressible Euler equations. We only give the final results without showing intermediate computational details. Only one nontrivial conservation law is obtained after the substitution, that is 
\begin{equation*}
D_{x^i}\left(g(t)u^i\right)=g(t)\frac{\partial u^i}{\partial x^i},
\end{equation*}
corresponding to the symmetry of pressure changes.
This is  the {\bf conservation of mass}.

\subsubsection{Conservation laws related to other variational symmetries}
Except those symmetries extended from symmetries of the incompressible Euler equations, the modified formal Lagrangian \eqref{eq:mflineuler} also admits other variational symmetries. They can also be used to compute conservation laws of the incompressible Euler equations.

The first kind of infinitesimal generators is
\begin{equation}
\widehat{Y}_i=\partial_{v^i}+u^i\partial_q,\quad i=1,2,3,
\end{equation}
which are variational symmetries for the modified formal Lagrangian \eqref{eq:mflineuler}, since the infinitesimal invariance criterion \eqref{eq:iic} is satisfied for each $\widehat{Y}_i$, that is 
\begin{equation}
\begin{aligned}
\operatorname{pr}\!\widehat{Y}_i(\widehat{L})&=\left(\frac{\partial u^i}{\partial t}+u^j\frac{\partial u^i}{\partial x^j}+\frac{\partial p}{\partial x^i}\right)+u^i\left(\frac{\partial u^j}{\partial x_j}\right)\\
&=D_tu^i+D_{x^j}\left(\delta_i^jp+u^iu^j\right).
\end{aligned}
\end{equation}
The conservation laws are already written in characteristic form and they correspond to the {\bf conservation of momentum}.

Another variational symmetry is generated by
\begin{equation}
\widehat{Y}_0=u^i\partial_{v^i}+\left(\sum_i\frac12(u^i)^2+p\right)\partial_q.
\end{equation}
The conservation law is obtained using Noether's Theorem again, namely
\begin{equation}
\begin{aligned}
\operatorname{pr}\!\widehat{Y}_0(\widehat{L})&=\sum_iu^i\left(\frac{\partial u^i}{\partial t}+u^j\frac{\partial u^i}{\partial x^j}+\frac{\partial p}{\partial x^i}\right)+\left(\frac12\sum_i(u^i)^2+p\right)\left(\frac{\partial u^j}{\partial x_j}\right)\\
&=D_t\left(\frac12\sum_i(u^i)^2\right)+D_{x^i}\left(\frac12\sum_j(u^j)^2u^i+pu^i\right)\\
&=D_t\left(\frac12|u|^2\right)+\nabla\cdot \left(\frac12|u|^2u+pu\right),
\end{aligned}
\end{equation}
which is the {\bf conservation of energy}.

\section{Conclusions and future work}
A modified formal Lagrangian formulation for studying conservations laws of differential equations was defined in this paper. It was proved that any system of differential equations admits at least one modified formal Lagrangian and its self-adjointness can be achieved via the simplest substitution for the dummy variables. Practical algorithms were introduced, that allow us to extend symmetries of the original system to symmetries of its modified  formal Lagrangian and hence to compute conservation laws directly from Noether's Theorem. The same substitution for dummy variables would yield  conservation laws of the original system. We studied the viscous Burgers' equation, the Fornberg--Whitham equation and the incompressible Euler equations as illustrations.

Moreover, since the modified formal Lagrangian formulation allows us to define formally a variational structure for any system of differential equations, methods for studying variational problems can be, at least formally, applied to study non-variational differential equations, such as, symplectic/multisymplectic structures \cite{Bri1997b,MR1999a},  structure-preserving numerical methods \cite{HLW2006,MPS1998,MW2001},  invariant calculus for variational problems \cite{Man2010,MRHP2019,Peng2013}, etc., beside Noether's two theorems. In particular, we mention that variational integrator for various non-variational differential equations have been developed using (modified) formal Lagrangians including, for instance, the advection equation and the vorticity equation  \cite{KM2015}, magnetohydrodynamics \cite{K2018,KTG2016} and nonlinear dissipative wave equations \cite{Obata2020}. 

\section*{Acknowledgements} 
This work was partially supported by JSPS KAKENHI Grant Number JP20K14365, JST-CREST Grant Number JPMJCR1914, Keio Gijuku Academic Development Funds, and Keio Gijuku Fukuzawa Memorial Fund.

\end{document}